\documentclass[11pt,a4paper]{article}

\usepackage[left=2.5cm, right=2.5cm, top=2cm, bottom=2cm]{geometry}

\usepackage{authblk}
\usepackage{graphicx} % Required for inserting images
\usepackage{natbib}
\usepackage{amsmath,amssymb}
\usepackage{mathtools, amsthm}
\usepackage{tikz}
\usetikzlibrary{arrows.meta, positioning}
\usepackage{tikz-qtree}
\usepackage{enumitem}
\setlist[itemize]{label=\textbullet}
\usepackage{multirow}
\usepackage{comment}
\usepackage{array}
\usepackage[T1]{fontenc}
\usepackage{textcomp}
\usepackage{pgfplots}
\pgfplotsset{compat=1.18}
\usepgfplotslibrary{fillbetween}
\usepackage{iftex}
\usepackage{subcaption}
\usepackage[ruled,vlined,linesnumbered]{algorithm2e}
\usepackage[dvipsnames]{xcolor}
\definecolor{darkblue}{rgb}{0.0, 0.0, 0.55}
\definecolor{chicago-maroon}{RGB}{128,0,0}
\usepackage[bookmarks, colorlinks=true, plainpages = false, citecolor= darkblue, linkcolor = chicago-maroon, anchorcolor = red, urlcolor = darkblue]{hyperref}

\newtheorem{proposition}{Proposition}
\newtheorem{remark}{Remark}
\newtheorem{example}{Example}

\title{Toward Variation-Independent Regression by Composition \\
\large{Discussion on ``Regression by Composition'' by Farewell, Daniel, Stensrud, and Huitfeldt}}
\author[1]{Ruixuan Zhao \thanks{ruixuan.zhao@utoronto.ca}} 
\author[2]{Oliver Dukes \thanks{oliver.dukes@ugent.be}}
\author[1,3]{Linbo Wang \thanks{linbo.wang@utoronto.ca}}
\author[4]{Lin Liu \thanks{linliu@sjtu.edu.cn}}
\affil[1]{Department of Computer and Mathematical Sciences, University of Toronto, Toronto, Canada}
\affil[2]{Department of Mathematics, Computer Science, and Statistics, Ghent University, Ghent, Belgium}
\affil[3]{Department of Statistical Sciences, University of Toronto, Toronto, Canada}
\affil[4]{Institute of Natural Sciences, Shanghai Jiao Tong University, Shanghai, China}
\date{}

\usepackage{setspace}
\begin{document}
 
\maketitle

We congratulate the authors on their insightful work, which broadens the scope of regression modeling beyond the classical generalized linear model paradigm with a fixed link function.

One structural issue that seems worth further reflection concerns variation dependence. When one starts from a reference distribution and introduces covariate effects through successive flows, the requirement that the resulting object remains a valid probability distribution can induce constraints among the parameters. The Appendix gives a simple example illustrating how such constraints may arise.

This, in turn, raises a natural question: can one construct a framework in which the parameters of interest remain variation independent? A useful starting point is the strategy of \cite{richardson2017modeling}, which begins with the target effect measure and then introduces nuisance components chosen to be variation independent of it. For instance, if the target parameter is the risk ratio
\[
\mathrm{RR}:=\frac{P(Y=1\mid \text{trt}=1)}{P(Y=1\mid \text{trt}=0)},
\]
one may pair it with the odds product of \cite{richardson2017modeling},
\[
\mathrm{OP}:=\frac{P(Y=1\mid \text{trt}=1)P(Y=1\mid \text{trt}=0)}{\{1-P(Y=1\mid \text{trt}=1)\}\{1-P(Y=1\mid \text{trt}=0)\}},
\]
which is variation independent of the risk ratio. This suggests a broader question in the present setting: when covariates act on the outcome through multiple mechanisms, so that interest centers on several effect measures rather than a single one, is it still possible to construct nuisance components that preserve variation independence?

As a concrete example, consider the DAG in Figure \ref{fig1}, with baseline covariate $L_0$, binary treatments $A_0$ and $A_1$, and binary outcome $Y$.

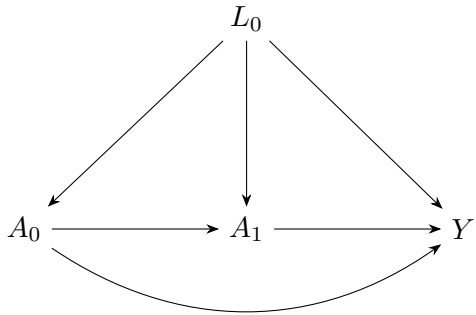
\begin{figure}[htp]
\centering
\begin{tikzpicture}[>=Stealth, node distance=2.2cm]
  \node (A0) {$A_0$};
  \node (A1) [right=of A0] {$A_1$};
  \node (Y)  [right=of A1] {$Y$};
  \node (L0) [above=of A1] {$L_0$};

  \draw[->] (L0) -- (A0);
  \draw[->] (L0) -- (A1);
  \draw[->] (L0) -- (Y);

  \draw[->] (A0) -- (A1);
  \draw[->] (A1) -- (Y);

  \draw[->, bend right=35] (A0) to (Y);
\end{tikzpicture}
\caption{A DAG with baseline covariate $L_0$, sequential binary treatments $(A_0,A_1)$, and binary outcome $Y$.}
\label{fig1}
\end{figure}

Let $p_{a_1,a_0}(l_0):=\Pr(Y=1\mid A_1=a_1, A_0=a_0, L_0=l_0)$. Suppose one is interested in three effect measures: 
\begin{align*}
& \mathrm{RR}_0(l_0):=\frac{p_{0,1}(l_0)}{p_{0,0}(l_0)},\\
& \mathrm{OR}_{1,0}(l_0):=\frac{p_{1,0}(l_0)/\{1-p_{1,0}(l_0)\}}{p_{0,0}(l_0)/\{1-p_{0,0}(l_0)\}},\\
& \mathrm{RR}_{1,1}(l_0):=\frac{p_{1,1}(l_0)}{p_{0,1}(l_0)}.
\end{align*}
In this case, the $l_0$-specific generalized odds product \citep{wang2023coherent}, defined as
\begin{align*}
\mathrm{GOP}(l_0):=\frac{\Pi_{a_1=0,1}\Pi_{a_0=0,1} p_{a_1,a_0}(l_0)}{\Pi_{a_1=0,1}\Pi_{a_0=0,1} \{1-p_{a_1,a_0}(l_0)\}},
\end{align*}
yields a variation independent parameterization.

\begin{proposition}\label{prop1}
For each $l_0$, the map
\begin{align}\label{map1}
\left(\mathrm{RR}_0(l_0), \mathrm{OR}_{1,0}(l_0), \mathrm{RR}_{1,1}(l_0), \mathrm{GOP}(l_0)\right)
\rightarrow \left(p_{a_1,a_0}(l_0), a_1,a_0\in\{0,1\}\right)
\end{align}
is a bijection from $(\mathbb{R}^+)^4$ to $(0,1)^4$. Furthermore, $\mathrm{RR}_0(l_0)$, $\mathrm{OR}_{1,0}(l_0)$, $\mathrm{RR}_{1,1}(l_0)$, and $\mathrm{GOP}(l_0)$ are variation independent.
\end{proposition}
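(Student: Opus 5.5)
Fix $l_0$ and suppress it from the notation. The map from probabilities to the four effect measures is clearly well defined from $(0,1)^4$ into $(\mathbb{R}^+)^4$. So the plan is to show that every target $(\mathrm{RR}_0,\mathrm{OR}_{1,0},\mathrm{RR}_{1,1},\mathrm{GOP})=(r_0,\theta,r_1,g)\in(\mathbb{R}^+)^4$ has exactly one preimage in $(0,1)^4$. Once the map is a bijection from $(\mathbb{R}^+)^4$ onto $(0,1)^4$, the four parameters range over a Cartesian product. That is exactly variation independence, so the second claim follows immediately from the first.

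The reduction is to write every cell probability as a function of the single baseline $x:=p_{0,0}\in(0,1)$.
\begin{itemize}
\item The risk ratio $\mathrm{RR}_0$ gives $p_{0,1}=r_0x$.
\item The risk ratio $\mathrm{RR}_{1,1}$ gives $p_{1,1}=r_1p_{0,1}=r_0r_1x$.
\item The odds ratio $\mathrm{OR}_{1,0}$ gives $p_{1,0}=\theta x/\{1-x+\theta x\}$, which lies in $(0,1)$ automatically for every $x\in(0,1)$.
\end{itemize}
The constraints $p_{0,1},p_{1,1}\in(0,1)$ then restrict $x$ to the interval $I=(0,m)$ with $m:=\min\{1,1/r_0,1/(r_0r_1)\}$. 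On $I$ all four probabilities are valid and are determined by $x$. Hence the problem reduces to showing that the equation $\mathrm{GOP}(x)=g$ has a unique root in $I$.

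Write $\log\mathrm{GOP}(x)=\sum_{a_1,a_0}\mathrm{logit}\,p_{a_1,a_0}(x)$ and argue in three steps.
\begin{enumerate}
\item \textbf{Monotonicity.} Each $p_{a_1,a_0}(x)$ is strictly increasing in $x$: three are linear in $x$ with positive slope, and the fourth has derivative $\theta/(1-x+\theta x)^2>0$. Since $\mathrm{logit}$ is strictly increasing, every summand is strictly increasing, and so is $\log\mathrm{GOP}$. It is also continuous on $I$.
\item \textbf{Behaviour as $x\downarrow0$.} All four probabilities tend to $0$, so every logit tends to $-\infty$.
\item \textbf{Behaviour as $x\uparrow m$.} The probability attaining the binding constraint tends to $1$: this is $p_{0,0}$ if $m=1$, $p_{0,1}$ if $m=1/r_0$, and $p_{1,1}$ if $m=1/(r_0r_1)$. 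Its logit tends to $+\infty$. The remaining logits tend to finite limits, or also to $+\infty$, because the other probabilities stay bounded away from $0$ near $m>0$.
\end{enumerate}
Hence $\log\mathrm{GOP}$ is a continuous strictly increasing bijection from $I$ onto $\mathbb{R}$. By the intermediate value theorem together with strict monotonicity, there is a unique $x^*\in I$ with $\mathrm{GOP}(x^*)=g$. This gives existence and uniqueness of the preimage, that is, injectivity and surjectivity of the map \eqref{map1}.

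The main point needing care is step 3, the limit at the endpoint $m$. One must check that when a probability hits $1$, no other probability simultaneously tends to $0$, which would produce an indeterminate $+\infty-\infty$. Because $m>0$ and every $p_{a_1,a_0}(x)$ is increasing and positive, all probabilities are bounded below by their positive values at, say, $m/2$. So no logit can tend to $-\infty$ near $m$, and the divergence is clean. Ties in the minimum defining $m$ only add further $+\infty$ terms and cause no problem.
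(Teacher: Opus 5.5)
Your proof is correct and follows essentially the same route as the paper: parametrize by $p_{0,0}$, restrict it to $(0,\min\{1,1/r_0,1/(r_0r_1)\})$, and show that $\log\mathrm{GOP}$ is continuous, strictly increasing, and goes from $-\infty$ to $+\infty$ on that interval. The differences are cosmetic. You get monotonicity from each logit being increasing instead of differentiating explicitly, and you spell out the endpoint limit and the deduction of variation independence, both of which the paper leaves implicit.
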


This example suggests that one possible way forward is to begin with a collection of target effect measures and then seek nuisance components that preserve variation independence without imposing additional constraints on the parameters. The same idea may also extend beyond the binary treatment setting to categorical and continuous treatments \citep{yin2022multiplicative}.

\begin{remark}
Once \(\mathrm{RR}_{1,1}(l_0)\) is replaced by the survival ratio \(\mathrm{SR}_{1,1}(l_0):=\frac{1-p_{1,1}(l_0)}{1-p_{0,1}(l_0)}\), the target effect measures are no longer variation independent, because \(\bigl(\mathrm{RR}_0(l_0), \mathrm{SR}_{1,1}(l_0)\bigr)\) is restricted to 
\[
\mathcal{D}
=
\left\{
(r,s)\in(\mathbb{R}^+)^2:\ s(1-r)<1
\right\}.
\]
\end{remark}

\newpage

\section*{Appendix}
\begin{example}(A simple RbC example showing variation dependence)
Consider a simple regression-by-composition (RbC) model $$\mathtt{y=Ber(1/2) ~|~ ScRisk1(1+trt)},$$  
which implies $P(Y=1\mid \text{trt}) = \frac{1}{2}\eta(\text{trt})$ and $\eta(\text{trt}) = \exp{(\alpha+\beta\text{trt})}$ corresponds to the linear predictor for flow $\mathtt{ScRisk1(1+trt)}$. Here, the risk ratio $\frac{P(Y=1\mid \text{trt}=1)}{P(Y=1\mid \text{trt}=0)}=\exp(\beta)$ and the baseline risk $P(Y=1\mid \text{trt}=0)=\frac{1}{2}\exp(\alpha)$ are not variation independent, as the validity of the probability model requires $\eta(\text{trt})\leq 2$. Furthermore, with more flows, the parameters are increasingly tied together, so maintaining variation independence is generally no longer possible.
\end{example}

\begin{proof}[Proof of Proposition \ref{prop1}]
We omit the baseline covariate $L_0$ throughout this proof. To show that \eqref{map1} is a bijection, let $\mathbf{c}=(c_1,c_2,c_3,c_4)$ be a vector in $(\mathbb{R}^+)^4$. It suffices to show that for any $\mathbf{c} \in (\mathbb{R}^+)^4$, there is one and only one $\mathbf{p}=(p_{a_1,a_0}, a_1,a_0\in \{0,1\})$ such that 
\begin{align}\label{eq1}
\left(\frac{p_{0,1}}{p_{0,0}}, \frac{p_{1,0}/\{1-p_{1,0}\}}{p_{0,0}/\{1-p_{0,0}\}},\frac{p_{1,1}}{p_{0,1}}\right) = (c_1,c_2,c_3)
\end{align}
and 
\begin{align}\label{eq2}
\text{GOP}=c_4.
\end{align}
Let $u:=p_{0,0}$, and then \eqref{eq1} implies that 
\begin{align*}
&p_{0,1}=c_1u,\\
& p_{1,0} = \frac{c_2u}{1-u+c_2u},\\
& p_{1,1} = c_1c_3u.
\end{align*}
Thus, once $u\in (0,1)$ is specified, the remaining three probabilities $p_{0,1}$, $p_{1,0}$ and $p_{1,1}$ are uniquely determined. Also, it follows from  $\mathbf{p}\in (0,1)^4$ and $\mathbf{c}\in (\mathbb{R}^+)^4$ that $u\in \left(0,\min\left\{1,\frac{1}{c_1},\frac{1}{c_1c_3}\right\}\right)$. It is easy to see that 
\begin{align*}
\left\{\mathbf{p}\in (0,1)^4: \eqref{eq1}\text{ holds}\right\} = \left\{\psi(u): u\in \left(0,\min\left\{1,\frac{1}{c_1},\frac{1}{c_1c_3}\right\}\right)\right\},
\end{align*}
where $\psi(u)=\left(u,c_1u,\frac{c_2u}{1-u+c_2u},c_1c_3 u\right)$. Thus, for any $\mathbf{p}\in (0,1)^4$ such that \eqref{eq1} holds, the constraint \eqref{eq2} can equivalently be expressed as 
\begin{align*}
\log(\text{GOP}) = \log \left(\frac{u}{1-u}\cdot\frac{c_1u}{1-c_1u}\cdot\frac{c_2u}{1-u}\cdot \frac{c_1c_3u}{1-c_1c_3u}\right)=\log(c_4).
\end{align*}
Let $f(u)=\log \left(\frac{u}{1-u}\cdot\frac{c_1u}{1-c_1u}\cdot\frac{c_2u}{1-u}\cdot \frac{c_1c_3u}{1-c_1c_3u}\right) - \log(c_4)$ and we would like to show that $f(u)$ has one and only one solution in $\left(0,\min\left\{1,\frac{1}{c_1},\frac{1}{c_1c_3}\right\}\right)$. Particularly, we have 
\begin{align*}
\frac{\partial f(u)}{\partial u} &= \frac{\partial}{\partial u}\left\{4\log(u)-2\log(1-u)-\log(1-c_1u)-\log(1-c_1c_3u)\right\} \\
&= \frac{4}{u} + \frac{2}{1-u} + \frac{c_1}{1-c_1u} + \frac{c_1c_3}{1-c_1c_3u}>0,
\end{align*}
implying $f(u)$ is monotone on $\left(0,\min\left\{1,\frac{1}{c_1},\frac{1}{c_1c_3}\right\}\right)$. Furthermore, $\lim_{u\rightarrow 0}f(u)\rightarrow -\infty$ and $\lim_{u\rightarrow \min\left\{1,\frac{1}{c_1},\frac{1}{c_1c_3}\right\}}f(u)\rightarrow +\infty$ and $f(u)$ is continuous in $\left(0,\min\left\{1,\frac{1}{c_1},\frac{1}{c_1c_3}\right\}\right)$. Therefore, $f(u)$ has one and only one solution in $\left(0,\min\left\{1,\frac{1}{c_1},\frac{1}{c_1c_3}\right\}\right)$ so that there is one and only one $\mathbf{p}\in (0,1)^4$ such that both \eqref{eq1} and \eqref{eq2} hold.

\end{proof}

\bibliographystyle{apalike}
\bibliography{main.bib}

\end{document}